\titlespacing{\section}{0pt}{3.5ex}{1.5ex}
\titlespacing{\subsection}{0pt}{2.5ex}{0.75ex}
\titlespacing{\subsubsection}{0pt}{2.5ex}{0.75ex}
\titlespacing{\theorem}{0pt}{4.5ex}{0.75ex}
\begin{document}

\newtheorem{theorem}{Theorem}
\newtheorem{lemma}{Lemma}
\newtheorem{definition}{Definition}

\title{Parameterizable Byzantine Broadcast in Loosely Connected Networks}
\author{Alexandre Maurer and S\'{e}bastien Tixeuil\\
UPMC Sorbonne Universit\'{e}s, Paris, France\\
E-mail: Alexandre.Maurer@lip6.fr, Sebastien.Tixeuil@lip6.fr\\
}

\maketitle

\begin{abstract}
We consider the problem of reliably broadcasting information in a multihop asynchronous network, despite the presence of Byzantine failures: some nodes are malicious and behave arbitrarly. We focus on non-cryptographic solutions.

Most existing approaches give conditions for perfect reliable broadcast (all correct nodes deliver the good information), but require a highly connected network. A probabilistic approach was recently proposed for loosely connected networks: the Byzantine failures are randomly distributed, and the correct nodes deliver the good information with high probability. A first solution require the nodes to initially know their position on the network, which may be difficult or impossible in self-organizing or dynamic networks. A second solution relaxed this hypothesis but has much weaker Byzantine tolerance guarantees.

In this paper, we propose a parameterizable broadcast protocol that does not require nodes to have any knowledge about the network. We give a deterministic technique to compute a set of nodes that always deliver authentic information, for a given set of Byzantine failures. Then, we use this technique to experimentally evaluate our protocol, and show that it significantely outperforms previous solutions with the same hypotheses.
\end{abstract}

\vspace{5mm}

\paragraph{Important disclaimer}  These results have NOT yet been published in an international conference or journal. This is just a technical report presenting intermediary and incomplete results. A generalized version of these results may be under submission.

\vspace{7mm}

\section{Introduction}

As modern networks grow larger, they become more likely to fail. Indeed, some nodes can be subject to crashes, attacks, transient bit flips, etc.
Many models for failures and attacks have been studied so far, but the most general one is the \emph{Byzantine} model \cite{LSP82j}: we assume that the failing nodes have a totally arbitrary behavior. In other words, we must anticipate the most malicious strategies they could adopt. This encompasses \emph{all} other possible types of failures, and has important security applications.

In this paper, we study the problem of reliably broadcasting information in a network despite the presence of Byzantine failures. This is a difficult problem, as a single Byzantine node, if not neutralized, can lie to the entire network. Our objective is to design broadcast protocols that prevent or limit the diffusion of malicious informations.

\paragraph{Related works.}

Many Byzantine-robust protocols are based on \emph{cryptography} \cite{CL99c,DFS05c}: the nodes use digital signatures or certificates. Therefore, the correct nodes can assess the validity of received informations and authenticate the sender across multiple hops. However, this approach may not be as general as we want, as the malicious nodes are supposed to ignore some cryptographic secrets: therefore, their behavior is not \emph{completely} arbitrary.
Besides, manipulating cryptographic operations requires important ressources, which may not always be available -- for instance, in sensor networks.
Finally, cryptography requires a trusted infrastructure that initially distributes public and private keys: if this initial infrastructure fails, the whole network fails. Yet, we want to design a solution where \emph{any} component can fail. 
For these reasons, we focus on cryptography-free solutions.

Cryptography-free solutions have first been studied in completely connected networks~\cite{LSP82j,AW98b,MMR03j,MRRS01c,MS03j}: a node can directly communicate with any other node, which implies the presence of a channel between each pair of nodes. Therefore, these approaches are hardly scalable, as the number of channels per node can be physically limited. We thus study solutions in multihop networks, where a node must rely on other nodes to broadcast informations.

A notable class of algorithms tolerates Byzantine failures with either space~\cite{MT07j,NA02c,SOM05c} or time~\cite{MT06cb,DMT11cb,DMT11j,DMT10cd,DMT10ca} locality. Yet, the emphasis of space local algorithms is on containing the fault as close to its source as possible. This is only applicable to the problems where the information from remote nodes is unimportant (such as vertex coloring, link coloring or dining philosophers). Also, time local algorithms presented so far can hold at most one Byzantine node and are not able to mask the effect of Byzantine actions. Thus, the local containment approach is not applicable to reliable broadcast.

It has been shown that, for agreement in the presence of up to $k$ Byzantine nodes, it is necessary and sufficient that the network is $(2k+1)$-connected, and that the number of nodes in the system is at least $3k+1$ \cite{D82j}. Also, this solution assumes that the topology is known to every node, and that nodes are scheduled according to the synchronous execution model.
Both requirements have been relaxed in \cite{NT09j}: the topology is unknown and the scheduling is asynchronous. Yet, this solution retains $2k+1$ connectivity for reliable broadcast and $k+1$ connectivity for detection (the nodes are aware of the presence of a Byzantine failure). In sparse networks such as a grid (where a node has at most four neighbors), both approaches can cope only with a single Byzantine node, independently of the size of the grid. 

Another existing approach is based, not on connectivity, but on the fraction of Byzantine neighbors per node. Broadcast protocols have been proposed for nodes organized on a grid \cite{K04c,BV05c}. However, the wireless medium typically induces much more than four neighbors per node, otherwise the broadcast does not work. Both approaches are based on a local voting system, and perform correctly if every node has strictly less than a $1/4$ fraction of Byzantine neighbors. This result was later generalized to other topologies \cite{PP05j}, assuming that each node knows the global topology. Again, in loosely connected networks, this constraint on the proportion of Byzantine nodes in any neighborhood may be difficult to assess.

All aforementioned results rely on strong \emph{connectivity} and Byzantine proportions assumptions in the network. In other words, tolerating more Byzantine failures requires to increase the connectivity, which can be a heavy constraint in a large network.
To overcome this problem, a probabilistic approach for reliable broadcast has been proposed in \cite{CtrZ}. In this setting, the distribution of Byzantine failures is assumed to be random. This hypothesis is realistic if we consider that each node has a given probability to fail or to be corrupted by an adversary. Another possible use of such a setting is structured overlay networks that make use of distributed hash tables, where the identifier (and thus location) of a node joining the network is attributed randomly: Byzantine nodes joining the network are thus randomly located in the overlay. Also, as we initially accepted that some node may become Byzantine, we can accept that a small minority of correct nodes are fooled by the Byzantine nodes.

With these assumptions, the network can tolerate a number of Byzantine failures that largely exceeds its connectivity \cite{CtrZ}.
This approach was recently generalized to tolerate a constant fraction of Byzantine nodes in a potentially infinite network with a bounded connectivity \cite{Scalbyz}.
However, both solutions require a global view of the network: each node must know its position in the network, in order to join particular subsets called \emph{control zones}.
This may be difficult or impossible in many types of networks, such as self-organized wireless sensor networks or peer-to-peer overlays.
This requirement was removed in \cite{Trig}, at the cost of a much lesser number of tolerated Byzantine than \cite{CtrZ,Scalbyz}.

\paragraph{Our contribution.}

In this paper, we elaborate on~\cite{Trig} and provide a cryptography-free, Byzantine-robust protocol adaptated to loosely connected network, that does \emph{not} require the nodes to initially know their position on the network. Our proposal is parameterizable in such a way that Byzantine tolerance of \cite{Trig} is significantely improved (for example, our experiments show that four times more Byzantine nodes can be tolerated when 0.99 communication reliability probability is required).

The paper is organized as follows:
\begin{itemize}
\item In Section~\ref{sec_desc}, we describe our protocol and give the algorithm executed by each correct node.
\item In Section~\ref{sec_prop}, we present a deterministic technique to determine a set of nodes that deliver the good information in any possible execution.
\item In Section~\ref{sec_exp}, we use this technique to experimentally evaluate the performances of our protocol.
\end{itemize}

\section{Description of the Protocol}
\label{sec_desc}

In this section, we provide an informal description of our protocol. Then, after stating the notations and hypotheses, we give the local algorithm that each correct node must execute.

\begin{figure*}
\begin{center}
\includegraphics[width=15cm]{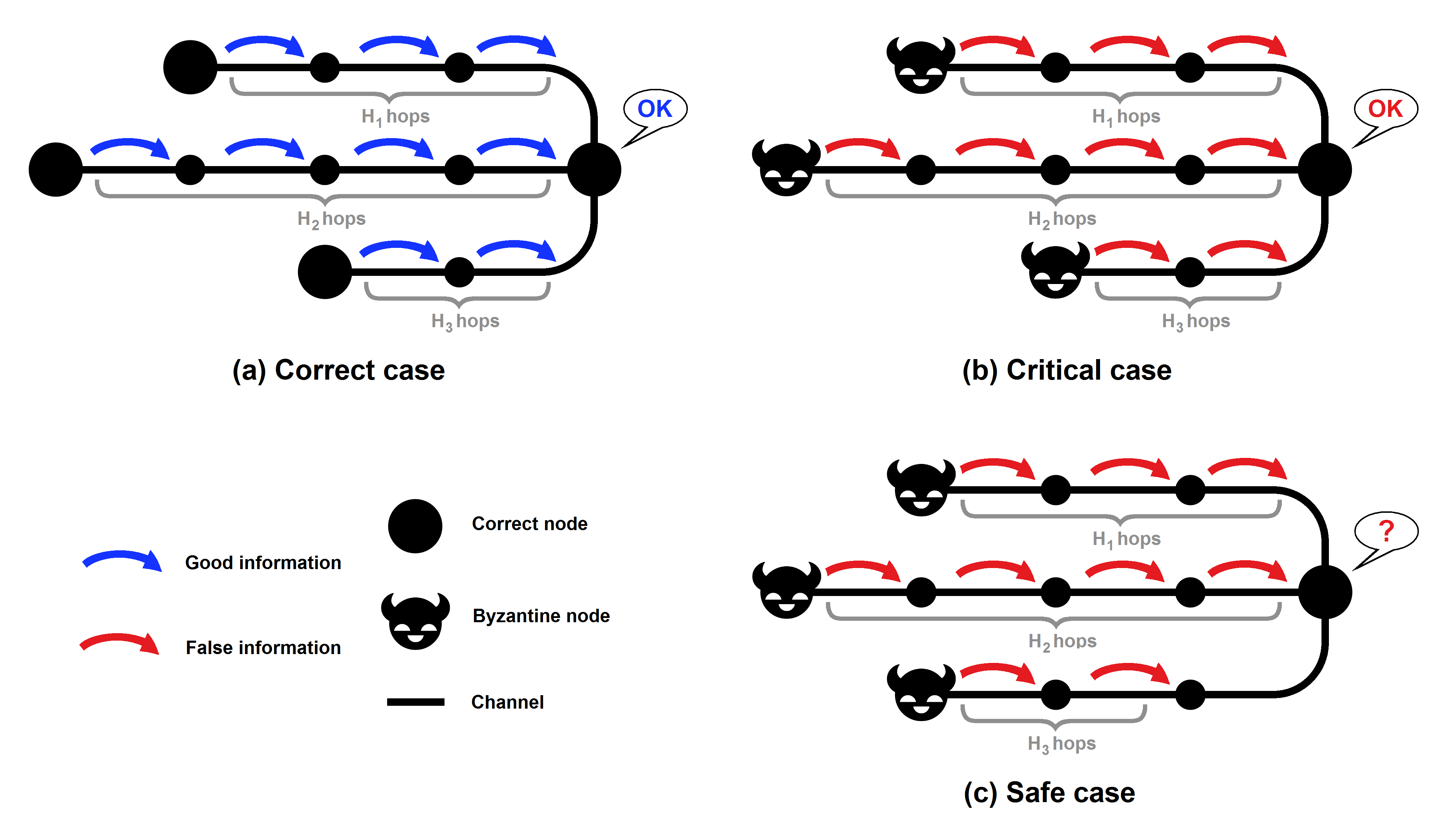}
\caption{Principle of the protocol} 
\label{fig:principle}
\end{center}
\end{figure*}

\subsection{Informal Description}
\label{informal}

The network is described by a set of processes, called \emph{nodes}. Some pairs of nodes are linked by a \emph{channel}, and can send messages to each other: we call them \emph{neighbors}. The network is \emph{asynchronous}: the nodes can send and receive messages at any time.

A particular node, called the \emph{source}, wants to broadcast an information $m$ to the rest of the network. In the ideal case, the source would send $m$ to its neighbors, which will transmit $m$ to their own neighbors -- and so forth, until every node receives $m$. Let us call this an \emph{unsecured broadcast}. In our setting however, some nodes can be malicious (\emph{Byzantine}) and broadcast false informations in the network. Of course, a correct node cannot know whether a neighbor is Byzantine or not.
In the following, we say that a correct node \emph{delivers} an information when it definitively considers that this information comes from the source.

To limit the diffusion of false informations, we introduce the following mechanism:
to deliver an information, a node must receive it through several node-disjoint paths, each path having a maximal length. For instance, in Figure~\ref{fig:principle}-a, the information must be received through \dots

\begin{itemize}
\item a first path of at most $H_1 = 3$ hops
\item a second path of at most $H_2 = 4$ hops
\item a third path of at most $H_3 = 2$ hops
\end{itemize}

This specific setting of the protocol can be described by the triplet $(H_1,H_2,H_3) = (3,4,2)$. To be more general, a setting of the protocol is described by a $n$-uplet $(H_1,\dots,H_n)$.

The underlying idea is as follows: if the Byzantine nodes are sufficiently
spaced, they will never manage to broadcast false informations. Indeed, to deliver a false information in setting $(3,4,2)$, a node must be located at \dots

\begin{itemize}
\item at most $H_1 = 3$ hops from a first Byzantine node
\item at most $H_2 = 4$ hops from a second Byzantine node
\item at most $H_3 = 2$ hops from a third Byzantine node
\end{itemize}

This critical configuration is illustrated in Figure~\ref{fig:principle}-b. However, if (for instance) the third Byzantine node is located at \emph{more} than $2$ hops, the false information will never be delivered. This is illustrated in Figure~\ref{fig:principle}-c. This intuitive idea is demonstrated further in Theorem~\ref{thsafe}.

This protocol is a generalization of the protocol proposed in \cite{Trig}, which corresponds to the settings $(1,H)$, $H$ being a parameter. The condition for safety was the following: the minimal distance between two Byzantine nodes is at least $H+2$ hops. Here, the general condition for safety is more complex to evaluate. However, we show in this paper that the performance of \cite{Trig} can be significantely improved using this more general approach (see Section~\ref{sec_exp}).

\subsection{Detailed description}

Let us precise the notations and hypotheses, and give the local algorithm executed by each correct node.

\paragraph{Notations and Hypotheses}

Let $(G,E)$ be a non-oriented graph representing the topology of the network. $G$ denotes the \emph{nodes} of the network. $E$ denotes the \emph{neighborhood} relationship. A node can only send messages to its neighbors. Some nodes are $correct$ and follow the protocol described thereafter. We consider that all other nodes are totally unpredictable (or \emph{Byzantine}) and may exhibit an arbitrary behavior.
We exclude the case where the source is Byzantine, which obviously cannot be correct.

We consider an asynchronous message passing network: any message sent is eventually received, but it can be at any time. We assume that, in an infinite execution, any process is activated inifinitely often. However, we make no hypothesis on the order of activation of the processes. Finally, we assume local topology knowledge: when a node receives a message from a neighbor $p$, it knows that $p$ is the author of the message. Therefore, a Byzantine node cannot lie about its identity to its direct neighbors. This model is referred to as the ``oral'' model in the literature.

\paragraph{Preliminaries}

The setting of the protocol is described by a $n$-tuple of integers $(H_1,\dots,H_n)$, known by all correct nodes. These values, once settled, should be considered as an inherent part of the protocol: they are ``hard-coded'' with the rest of the algorithm. Therefore, they are not concerned by any broadcast or agreement problem. Let $H = max_{i \in \{1,\dots,n\}} H_i$.

The messages exchanged in the protocol are tuples of the form $(m,S)$:
\begin{itemize}
\item $m$ is the information broadcast by the source, or pretending to be it.
\item $S$ is a set containing the identifiers of nodes already visited by the message. It ensures that a node is never visited twice, and that the paths are actually node-disjoint.
\end{itemize}
Each correct node holds a dynamic set $Rec$, where the messages $(m,S)$ received are recorded. In the following, the set $Rec$ of a node $v$ is referred to as $v.Rec$.

Finally, we say that a node $multicasts$ a message when it sends it to all its neighbors.

\paragraph{Local execution of the protocol}

Initially, the source multicasts $(m,\o)$. Then, each correct node executes the following algorithm:

\begin{enumerate}
\item When a message of type $(m,S)$ is received from a neighbor $q$:
\begin{enumerate}
\item If $q$ is the source, deliver $m$ and multicast $(m,\o)$
\item If $q \notin S$ and $card(S) < H$:
\begin{itemize}
\item Add $(m, S \cup \{q\})$ to the set $Rec$
\item Multicast $(m, S \cup \{q\})$
\end{itemize}
\end{enumerate}

\item When there exists $(m,S_1,\dots,S_n)$ such that:
\begin{enumerate}
\item $\forall i \in \{1,\dots,n\}$, we both have $(m,S_i)\in Rec$ and $card(S_i) \leq H_i$
\item $\forall (i,j) \in \{1,\dots,n\}^2$, $S_i \cap S_j = \o$
\end{enumerate}
Then, deliver $m$ and multicast $(m,\o)$.
\end{enumerate}

\section{Protocol Properties}
\label{sec_prop}

In this section, we adopt the point of view of an omniscient external observer, knowing the positions of Byzantine nodes. These positions are, of course, unknown by the correct nodes: this is just a global view of the system.

In \ref{safenet}, we give a condition on the placement of Byzantine nodes to guarantee the \emph{safety} of the network -- that is, no correct node can deliver a false information. Then, in \ref{relnet}, we give a methodology to determine a reliable node set -- that is, a set of nodes that always deliver the good information, in any possible execution. This methodology is used in the next section to evaluate the performances of the protocol, without actually simulating it.

In addition, \ref{tight}, we show that the bounds of our properties are tight.
At last, in \ref{messcomp}, we evaluate the message complexity of our protocol.

\subsection{Network Safety}
\label{safenet}

Let us give a general condition on the placement of Byzantine nodes, ensuring that no correct node ever delivers a false information.
The following theorem is the demonstration of the intuitive idea exposed in \ref{informal}, proving the correctness of our algorithm.
Notice that is does not necessarily ensure that the correct nodes actually deliver the good information: this aspect is studied further in \ref{partrel}.

Fisrt, let us give some preliminary definitions.

\begin{definition}[Path]
A N-hops \emph{path} is a sequence of distinct nodes $(p_0,\dots,p_N)$ such that, $\forall i \in \{1,\dots,N\}$, $p_i$ and $p_{i-1}$ are neighbors.
We say that this path \emph{connects} $p_0$ and $p_N$. This path is \emph{correct} if all its nodes are correct.
\end{definition}

\begin{definition}[Disjoint paths]
\label{disjoint}
Two path $(p_0,\dots,p_N)$ and $(p'_0,\dots,p'_N)$ are \emph{disjoint} if $\{p_1,\dots,p_{N-1}\} \cap \{p'1,\dots,p'_{N-1}\} = \o$.
\end{definition}

This definition is introduced for the needs of the following theorems. Note that we do not claim that disjoint paths are strictly equivalent to \emph{node-disjoint} paths.

\begin{theorem}[Network Safety]

\label{thsafe}
For a given correct node $u$, let $Critical(u)$ be the following proposition: \\There exists \dots
\begin{itemize}
\item $n$ distinct Byzantine nodes $(b_1,\dots,b_n)$
\item $n$ disjoint paths $(\mathcal{X}_1,\dots,\mathcal{X}_n)$ such that, $\forall i \in \{1,\dots,n\}$, $\mathcal{X}_i$ is a path of at most $H_i$ hops connecting $u$ and $b_i$.
\end{itemize}

If, for every correct node $u$, $Critical(u)$ is false, then no correct node can deliver a false information.
\end{theorem}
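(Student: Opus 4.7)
The plan is to argue by contrapositive: assume some correct node $u$ delivers a false information $m$, and extract from the execution a witness for $Critical(u)$. First observe that a correct node can deliver only via step 1(a) or step 2 of the algorithm; step 1(a) only fires when the sender is the source, and since the source is correct it only emits the authentic message, so $m$ false forces delivery via step 2. Consequently there exist tuples $(m,S_1),\dots,(m,S_n)$ in $u.Rec$ with $|S_i|\le H_i$ and the $S_i$ pairwise disjoint.

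The technical heart is the following lemma, which I would state and prove separately: whenever a correct node $v$ holds $(m,S)\in v.Rec$ for a false $m$, there is a Byzantine node $b\in S$ and a path of at most $|S|$ hops from $b$ to $v$ whose interior nodes all lie in $S$. I would prove this by tracing the forwarding chain backwards. By step 1(b), $v$ added $(m,S)$ upon receiving $(m,S\setminus\{q_1\})$ from a neighbor $q_1\in S$; if $q_1$ is correct, then $q_1$ must have had $(m,S\setminus\{q_1\})\in q_1.Rec$ before multicasting, so we iterate and find $q_2\in S\setminus\{q_1\}$, and so on. The crucial observation is that this chain cannot terminate at the source (which only ever multicasts the authentic message), so it must stop at the first Byzantine sender $b$ encountered; and when the correct node at the end of the chain received $(m,T)$ from $b$, it added $b$ to the recorded set, so $b\in S$. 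The $q_j$'s are distinct elements of $S$ and $b$ is also in $S$ and distinct from them, so the chain has at most $|S|$ steps, yielding a path of at most $|S|$ hops from $b$ to $v$ whose interior sits inside $S$.

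Applying this lemma to each $(m,S_i)$ produces $n$ Byzantine nodes $b_i\in S_i$ and paths $\mathcal{X}_i$ from $b_i$ to $u$ with at most $|S_i|\le H_i$ hops and interior inside $S_i$. Since the $S_i$ are pairwise disjoint, the $b_i$ are distinct, and the interiors of distinct $\mathcal{X}_i$'s are disjoint; thus the paths are disjoint in the sense of Definition~\ref{disjoint}. This is exactly $Critical(u)$, contradicting the hypothesis of the theorem, and establishing the contrapositive.

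The main obstacle is that a Byzantine node may insert arbitrary identifiers into the set $S$ it transmits, so $S$ at a downstream correct node can contain ``ghost'' nodes not lying on any real forwarding path. The lemma must therefore extract only the genuine tail of correct forwarders together with the \emph{last} Byzantine sender to produce the path; the protocol's rule that a correct receiver always unions in the actual sender's identifier is what guarantees that this last Byzantine node is effectively in $S$, which is precisely what makes the backward trace terminate in at most $|S|$ steps and ensures the distinctness and disjointness needed in the final step.
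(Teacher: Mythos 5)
There is a genuine gap in your key lemma. You claim that whenever a correct node holds $(m,S)\in Rec$ for a false $m$, the backward trace of senders must end at a Byzantine node, because it ``cannot terminate at the source.'' But the source is not the only correct node that ever multicasts $(m,\o)$: by step 2 of the algorithm, \emph{any} correct node that delivers $m$ --- including a correct node that has itself been fooled into delivering the false information --- multicasts $(m,\o)$. So the chain can perfectly well terminate at a correct node: for example, if a fooled correct node $w$ multicasts $(m,\o)$, its correct neighbor records $(m,\{w\})$ and forwards, and a node two hops away ends up with $(m,S)$ where $S=\{w,x\}$ contains no Byzantine node at all. Hence your lemma is false as stated, and your contrapositive goal --- establishing $Critical(u)$ for the very node $u$ that delivered the false information --- is also too strong in general: a late deliverer may receive the false information entirely through correct nodes that were fooled earlier, so $Critical$ need not hold at that node; the theorem only needs $Critical$ to hold at \emph{some} correct node.

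The missing idea, which is exactly how the paper closes this hole, is to pick $v$ as the \emph{first} correct node (in the execution) to deliver the false information $m'$. All messages contributing to $v.Rec$ at delivery time were sent before $v$ delivered, and a correct node multicasts $(m',\o)$ only upon delivering $m'$; since no correct node delivered $m'$ before $v$, no correct node has yet multicast $(m',\o)$, and the source never does because $m'\neq m$. Only then does the backward trace have to terminate at a Byzantine sender, and the rest of your argument (the Byzantine identity being unioned into $S_i$ thanks to the oral model, the hop bound $card(S_i)\leq H_i$, and disjointness of the paths from the pairwise disjointness of the $S_i$) goes through as in the paper. So your structure is essentially the right one, but without the first-deliverer (or an equivalent induction on delivery order) the central step of extracting a Byzantine node from each $S_i$ does not hold.
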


\begin{proof}
The proof is by contradiction. Let us suppose the opposite: for every correct node $u$, $Critical(u)$ is false, yet at least one correct node delivers a false information.
Let $v$ be the first correct node to deliver a false information.
In the following, we show that $Critical(v)$ is necessarily true, contradicting the previous statement. This contradiction achieves the proof.

Let $m' \neq m$ be the information delivered by $v$.
As $v$ is correct and delivers $m'$, according to the protocol: there exists $(S_1,\dots,S_n)$ such that,
$\forall i \in \{1,\dots,n\}$, $(m',S_i) \in v.Rec$ and $card(S_i) \leq H_i$.

Let us consider a given index $i \in \{1,\dots,n\}$, and let $q_0 = v$. Let $\mathcal{P}_k$ be the following proposition:
\\There exists a path $(q_1,\dots,q_k)$, with $\{q_1,\dots,q_k\} \subseteq S_i$, such that $q_{k-1}$ received $(m',S_i-\{q_1,\dots,q_k\})$ from $q_k \in S_i - \{q_1,\dots,q_{k-1}\}$.
In our notations, $\{q_1,\dots,q_{k-1}\} = \o$ for $k = 1$.

\begin{itemize}

\item First, let us show that $\mathcal{P}_1$ is true.
According to the protocol, the statement $(m',S_i) \in v.Rec$ implies that $v$ received $(m,S_i-\{q_1\})$ from a node $q_1 \in S_i$.
It is actually possible, as $card(S_i - \{q_1\}) \leq H_i - 1 < H$.
So $\mathcal{P}_1$ is true.

\item  Now, let us suppose that $\mathcal{P}_k$ is true, for $k < card(S_i)$.
So $q_k$ sent $(m',S_i-\{q_1,\dots,q_k\})$ to $q_{k-1}$. Let us suppose that $q_k$ is correct. Then, according to the protocol, it implies that 
$q_k$ received $(m,S_i-\{q_1,\dots,q_{k+1}\})$ from a node $q_{k+1} \in S_i-\{q_1,\dots,q_k\}$.
It is actually possible, as $card(S_i - \{q_1,\dots,q_{k+1}\}) \leq H_i - k - 1 < H$. So either $\mathcal{P}_{k+1}$ is true, either $q_k$ is Byzantine.
\end{itemize}

Therefore, by induction:

\begin{itemize}
\item Either there exists an index $k \in \{1,\dots,card(S_i)-1\}$ and a path $(q_1,\dots,q_k)$ such that $q_k$ is Byzantine. Let $b_i = q_k$, and let $\mathcal{X}_i$ be the path $(v,q_1,\dots,q_k)$.

\item Either $\mathcal{P}_{card(S_i)}$ is true, and $q_{card(S_i)}$ sent $(m',S_i-\{q_1,\dots,q_{card(S_i)}\}) = (m',\o)$ to $q_{card(S_i-1)}$.
The node $q_{card(S_i)}$ cannot be the source, as $m' \neq m$. Also, it cannot be a correct node, as $v$ is the first correct node to deliver $m'$.
So $q_{card(S_i)}$ is necessarily Byzantine. Let $b_i = q_{card(S_i)}$, and let $\mathcal{X}_i$ be the path $(v,q_1,\dots,q_{card(S_i)})$.
\end{itemize}

In both cases, the path $\mathcal{X}_i$ connects $v$ and $b_i$ with at most $card(S_i) \leq H_i$ hops.

According to Definition~\ref{disjoint}, for $(i,j) \in \{1,\dots,n\}^2$, the path $\mathcal{X}_i$ and $ \mathcal{X}_j$ are disjoint if $(S_i-\{b_i\}) \cap (S_j-\{b_j\}) = \o$.
As $v$ is correct and delivers $m'$, according to the protocol: $\forall (i,j) \in \{1,\dots,n\}^2$, $S_i \cap S_j = \o$.
Therefore, the paths $(\mathcal{X}_1,\dots,\mathcal{X}_n)$ are disjoint, and the nodes $(b_1,\dots,b_n)$ are distinct. Thus, $Critical(v)$ is true. Thus, the result.

\end{proof}

Also, as $Critical(u)$ requires $n$ distinct Byzantine nodes, a sufficient condition for safety is that the number of Byzantine nodes does not exceeds $n - 1$.

\subsection{Network Reliability}
\label{relnet}

Here, we suppose that the conditions of Theorem~\ref{thsafe} are satisfied: no false information can be delivered by a correct node.
We now give a methodology to construct a reliable node set.

\label{partrel}

\begin{definition}[Reliable node set]
\label{rns}
For a given source and a given set of Byzantine nodes,
a \emph{reliable node set} is a set of correct nodes that always deliver the good information, in any possible execution.
\end{definition}

The following theorem explains how to construct such a set node by node: for a given reliable node set $\mathcal{R}$ and a given correct node $v$, we can determine if $v$ can be added to $\mathcal{R}$ ($\mathcal{R} := \mathcal{R} \cup \{v\}$) -- and so forth, until we obtain the largest possible reliable node set.
To initialize the construction of $\mathcal{R}$, simply notice that the source and its correct neighbors form a reliable node set, according to the protocol.

\begin{theorem}[Construction of a reliable node set]

\label{threl}
Let us suppose that the conditions of Theorem~\ref{thsafe} are satisfied.
Let $\mathcal{R}$ be a reliable node set, and let $v \notin \mathcal{R}$ be a correct node. If there exists \dots
\begin{itemize}
\item $n$ distinct nodes $(r_1,\dots,r_n)$ of $\mathcal{R}$
\item $n$ disjoint correct paths $(\mathcal{X}_1,\dots,\mathcal{X}_n)$ such that, $\forall i \in \{1,\dots,n\}$, $\mathcal{X}_i$ is a path of at most $H_i$ hops connecting $v$ and $r_i$.
\end{itemize}
Then, $\mathcal{R} \cup \{v\}$ is also a reliable node set.
\end{theorem}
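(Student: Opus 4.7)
The plan is to show that in any execution, $v$ eventually satisfies the delivery condition of rule~2 of the protocol, and cannot deliver any other value. Safety is immediate from Theorem~\ref{thsafe}: under its hypotheses no correct node (including $v$) ever delivers a false information, so the only value $v$ could possibly deliver is the source's message $m$. Hence it is enough to exhibit, in any execution, $n$ messages $(m,S_1),\dots,(m,S_n)\in v.Rec$ with $|S_i|\le H_i$ and the $S_i$ pairwise disjoint.

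First I would process each path $\mathcal{X}_i$ separately. Write $\mathcal{X}_i=(v=p_0^{(i)},p_1^{(i)},\dots,p_{k_i}^{(i)}=r_i)$ with $k_i\le H_i$. Because $r_i\in\mathcal{R}$ and $\mathcal{R}$ is reliable, $r_i$ eventually delivers $m$; by rule~1(a) (if $r_i$ is the source) or by rule~2 (otherwise), $r_i$ multicasts $(m,\emptyset)$. I would then induct on $j$ decreasing from $k_i-1$ to $1$ along the correct path to prove that each $p_j^{(i)}$ eventually receives $(m,T_j)$ from $p_{j+1}^{(i)}$ with $T_j=\{p_{j+1}^{(i)},\dots,p_{k_i}^{(i)}\}\subseteq\mathcal{X}_i\setminus\{v\}$ and $|T_j|=k_i-j\le H_i-1<H$. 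At each step the premises of rule~1(b) are met (sender is not in the received set because the path has no repetition, and the set has size $<H$), so $p_j^{(i)}$ adds $(m,T_j\cup\{p_{j+1}^{(i)}\})=(m,T_{j-1})$ to its $Rec$ and multicasts it. Applied at $j=1$, this gives that $v$ eventually receives $(m,S_i)$ from $p_1^{(i)}$, with $S_i=\{p_1^{(i)},\dots,p_{k_i}^{(i)}\}$; by rule~1(b), which again applies because $|S_i|\le H_i\le H$ (and after inserting it into $v.Rec$), we have $(m,S_i)\in v.Rec$.

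Next I would verify the cross-path conditions. By construction $S_i\subseteq\mathcal{X}_i\setminus\{v\}$, that is $S_i$ consists of the interior nodes of $\mathcal{X}_i$ together with the endpoint $r_i$. Definition~\ref{disjoint} guarantees that the interiors of $\mathcal{X}_i$ and $\mathcal{X}_j$ are disjoint, and the hypothesis that $r_1,\dots,r_n$ are distinct handles the endpoints, so $S_i\cap S_j=\emptyset$ for all $i\ne j$. Combined with $|S_i|\le H_i$, the tuple $(m,S_1,\dots,S_n)$ fulfils the delivery condition of rule~2, so $v$ delivers $m$. Since this holds in any execution, $\mathcal{R}\cup\{v\}$ is a reliable node set.

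The main obstacle I anticipate is the inductive step along each path. One has to be careful that the size bound $|S|<H$ really holds at every hop so that rule~1(b) fires (not rule~1(a) requirements or being blocked); this is ensured by $k_i\le H_i\le H$ and by the fact that the path has no repeated vertex, so the predecessor on the path is never already in the received set. A minor but worth-mentioning subtlety is the case where $r_i$ itself is the source: then $p_{k_i-1}^{(i)}$ takes the rule~1(a) branch and multicasts $(m,\emptyset)$, so the induction starts one hop closer to $v$ and yields a strictly smaller $S_i$, which only makes the conclusion easier.
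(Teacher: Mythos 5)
Your proposal is correct and follows essentially the same route as the paper: an induction along each correct path $\mathcal{X}_i$ starting from $r_i$'s multicast of $(m,\emptyset)$, showing $v$ eventually records $(m,S_i)$ with $S_i$ the path nodes minus $v$, and then using path disjointness plus distinctness of the $r_i$ to trigger the delivery rule (your remark on $r_i$ being the source is extra care the paper omits). The only blemish is an off-by-one in your definition of $T_j$: as written it contains the sender $p_{j+1}^{(i)}$, which would block rule~1(b) and contradicts your own update $T_{j-1}=T_j\cup\{p_{j+1}^{(i)}\}$; it should be $T_j=\{p_{j+2}^{(i)},\dots,p_{k_i}^{(i)}\}$, after which everything, including the size bound at $v$, goes through exactly as in the paper.
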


\begin{proof}
According to Theorem~\ref{thsafe}, no correct node can deliver a false information. So, if a correct node delivers an information, it is necessary the good information.

Let us consider a given index $i \in \{1,\dots,n\}$.
Let $\mathcal{X}_i = (q_0,\dots,q_M)$, with $q_0 = r_i$ and $q_M = v$. By definition, we have $M \leq Hi \leq H$.
Let us prove the following property $\mathcal{P}_k$ by induction, $\forall k \in \{1,\dots,M\}$:
\\The node $q_k$ eventually receives $(m,\{q_0,\dots,q_{k-2}\})$ from $q_{k-1}$. In our notations, $\{q_0,\dots,q_{k-2}\} = \o$ for $k=1$.

\begin{itemize}

\item First, let us show that $\mathcal{P}_1$ is true.
As $\mathcal{R}$ is a reliable node set, according to Definition~\ref{rns}, the node $r_i \in \mathcal{R}$ eventually delivers the good information.
According to the protocol, it implies that $r_i$ also multicasts $(m,\o)$.
So $q_1$ eventully receives $(m,\o)$ from $q_0$, and $\mathcal{P}_1$ is true.

\item Now, let us suppose that $\mathcal{P}_k$ is true for $k \leq M$.
As $q_{k-1} \notin \{q_0,\dots,q_{k-2}\}$, and $card(\{q_0,\dots,q_{k-2}\}) < M \leq H$, $q_k$ eventually multicasts $\{q_0,\dots,q_{k-1}\}$.
So $q_{k+1}$ eventually receives $\{q_0,\dots,q_{k-1}\}$ from $q_k$, and $\mathcal{P}_{k+1}$ is true.

\end{itemize}

So $\mathcal{P}_M$ is true and the node $q_M = v$ eventually receives $(m,\{q_0,\dots,q_{M-2}\})$ from $q_{M-1}$.
As $q_{M-1} \notin \{q_0,\dots,q_{M-2}\}$ and $card(\{q_0,\dots,q_{M-2}\}) < M \leq H$, $v$ eventually adds $(m,\{q_0,\dots,q_{M-1}\})$ to the set $Rec$.
Let $S_i = \{q_0,\dots,q_{M-1}\}$.

So, $\forall i \in \{1,\dots,n\}$, we have $(m,S_i) \in v.Rec$ and $card(S_i) < H_i$.
Besides, as the paths $(\mathcal{X}_1,\dots,\mathcal{X}_n)$ are disjoint, according to Definition~\ref{disjoint}: $\forall (i,j) \in \{1,\dots,n\}^2$, we have $(S_i - \{r_i\}) \cap (S_j - \{r_j\}) = \o$.
Thus, as the nodes $(r_1,\dots,r_n)$ are distinct, we have $S_i \cap S_j = \o$.
Therefore, according to the protocol, $v$ eventually delivers $m$, and $\mathcal{R} \cup \{v\}$ is a reliable node set.

\end{proof}

\subsection{Bounds Tightness}
\label{tight}

Let us show that the condition for safety (Theorem~\ref{thsafe}) is tight, and that the methodology to construct a reliable node set (Theorem~\ref{threl}) is optimal in a safe network.

\begin{theorem}[Bounds tightness for Theorem~\ref{thsafe}]
If the condition of Theorem~\ref{thsafe} is not satisfied, it is strictly impossible to guarantee network safety.
\end{theorem}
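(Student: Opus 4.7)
The plan is to prove the contrapositive by constructing an explicit adversarial execution: assume $Critical(u)$ holds for some correct node $u$, and exhibit an execution in which $u$ delivers a false information $m' \neq m$. The witnesses furnished by $Critical(u)$ are $n$ distinct Byzantine nodes $b_1,\dots,b_n$ together with $n$ pairwise disjoint paths $\mathcal{X}_i = (q_0^i,q_1^i,\dots,q_{k_i}^i)$, where $q_0^i = u$, $q_{k_i}^i = b_i$, and $k_i \le H_i$. The attack makes each Byzantine endpoint $b_i$ inject a forged message against $u$ along $\mathcal{X}_i$ in the reverse direction.

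Concretely, I would have $b_i$ send $(m',\varnothing)$ to $q_{k_i-1}^i$. Walking backwards along $\mathcal{X}_i$, I prove by a short induction on $j$ (from $k_i - 1$ down to $1$) that the node $q_j^i$ eventually receives the message $(m',\{q_{k_i}^i,q_{k_i-1}^i,\dots,q_{j+2}^i\})$ from $q_{j+1}^i$; since the sender $q_{j+1}^i$ is not in this set and the set has size $k_i - j - 1 < H$, a correct $q_j^i$ executes the protocol and multicasts $(m',\{q_{k_i}^i,\dots,q_{j+1}^i\})$. Any intermediate node of $\mathcal{X}_i$ that happens to be Byzantine is simply scripted, by the arbitrary-behavior assumption, to perform exactly the same multicast; this is where the attack trivially bypasses Byzantine relays. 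Asynchrony lets me schedule these forged messages so that the induction runs uninterrupted.

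At the end of the induction, $u$ receives $(m',S_i)$ from $q_1^i$, where $S_i = \{q_1^i,\dots,q_{k_i}^i\}$. By construction $\mathrm{card}(S_i) = k_i \le H_i$, so the size condition of the delivery rule is met. The disjointness clause of Definition~\ref{disjoint} gives that the interiors $\{q_1^i,\dots,q_{k_i-1}^i\}$ are pairwise disjoint across $i$, and the hypothesis that the $b_i$ are distinct closes the gap at the far endpoints; together these yield $S_i \cap S_j = \varnothing$ for $i \ne j$. Therefore $u$ delivers $m'$, which differs from the legitimate information $m$, breaking safety.

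The main delicacy I anticipate is bookkeeping: ensuring that the inductive message sets stay consistent with the protocol's visited-set semantics, and that the disjointness of the $S_i$ is correctly extracted from the disjointness of \emph{interiors} of the $\mathcal{X}_i$ together with the distinctness of the $b_i$. The asynchronous scheduling and Byzantine mimicry of correct behavior are routine once stated, so the real content is the straightforward but careful path-wise induction.
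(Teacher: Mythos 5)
Your proposal is correct and follows essentially the same route as the paper: the paper argues by contradiction, has the Byzantine nodes $b_1,\dots,b_n$ multicast $(m',\o)$, and then invokes ``a reasoning similar to the proof of Theorem~\ref{threl}'' to conclude that $u$ delivers $m'$ -- which is exactly the backward path-wise induction you spell out, together with the same disjointness bookkeeping used in Theorem~\ref{threl}. The only difference is that you make explicit what the paper leaves implicit (scheduling under asynchrony, and scripting any Byzantine intermediate node on an $\mathcal{X}_i$ to mimic a correct relay, since $Critical(u)$ does not require the paths to be correct), which is a faithful and slightly more careful rendering of the same argument.
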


\begin{proof}
Let us suppose the opposite: the condition of Theorem~\ref{thsafe} is not satisfied, yet the network is safe - that is, no correct node can deliver a false information.
As the condition of Theorem~\ref{thsafe} is not satisfied, there exists at least one correct node $u$ such that $Critical(u)$ is true:
\\There exists \dots
\begin{itemize}
\item $n$ distinct Byzantine nodes $(b_1,\dots,b_n)$
\item $n$ disjoint paths $(\mathcal{X}_1,\dots,\mathcal{X}_n)$ such that, $\forall i \in \{1,\dots,n\}$, $\mathcal{X}_i$ is a path of at most $H_i$ hops connecting $u$ and $b_i$.
\end{itemize}

Thus, it is possible that the Byzantine nodes $(b_1,\dots,b_n)$ unanimously multicast $(m',\o)$, with $m' \neq m$. If so, with a reasoning similar to the proof of Theorem~\ref{threl}, we show that $u$ eventually delivers the false information $m'$. Therefore, the network cannot be safe. This contradiction completes the proof.

\end{proof}

\begin{theorem}[Bounds tightness for Theorem~\ref{threl}]
Let us suppose that the network is safe. Then, the set constructed with Theorem~\ref{threl} is the largest possible reliable node set.
\end{theorem}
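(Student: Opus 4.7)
The plan is to prove the contrapositive: any correct node that belongs to some reliable node set must already be produced by the iterative construction of Theorem~\ref{threl}. Write $\mathcal{R}_\infty$ for the closure obtained by that construction, starting from the source and its correct neighbors. Since the network is safe and Byzantine nodes can behave arbitrarily, I would pin down one specific fair execution $E$ in which every Byzantine process stays silent (emits no messages at all); such an execution is admissible. Let $D$ be the set of correct nodes that eventually deliver $m$ in $E$. Every reliable node set is contained in $D$ by Definition~\ref{rns}, so the theorem reduces to proving $D \subseteq \mathcal{R}_\infty$.

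I would establish $D \subseteq \mathcal{R}_\infty$ by induction on the order in which nodes of $D$ first deliver $m$ in $E$ (using any linearization of the happens-before relation restricted to delivery events). The base case covers the correct neighbors of the source, which deliver via step~1(a) and are exactly the seeds of the construction. For the inductive step, suppose $v \in D$ delivers $m$ and every node that delivered strictly before $v$ already lies in $\mathcal{R}_\infty$. The delivery at step~2 gives sets $(m,S_1),\dots,(m,S_n) \in v.Rec$ with the $S_i$ pairwise disjoint and $\text{card}(S_i) \leq H_i$. For each $i$ I would trace back the forwarding chain that carried $(m,S_i)$ to $v$, exactly as in the proof of Theorem~\ref{thsafe}; because every Byzantine process is silent in $E$, no link of the chain can be produced by a Byzantine (who would have had to send a message), so the chain consists solely of correct nodes. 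It cannot simply ``get stuck'' at a silent Byzantine either: the only way a correct node $q_{k-1}$ receives $(m, S_i - \{q_1,\dots,q_k\})$ from $q_k$ is if $q_k$ actually multicast it, and a Byzantine in $E$ multicasts nothing. Hence the trace terminates when some correct $r_i \in S_i$ multicasts $(m,\emptyset)$, an event that (by step~1(a) or step~2) occurs precisely when $r_i$ delivers $m$. Causality forces $r_i$ to deliver before $v$, so $r_i \in \mathcal{R}_\infty$ by the induction hypothesis.

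Putting the trace together, let $\mathcal{X}_i = (v, q_1,\dots,q_k = r_i)$. It has at most $\text{card}(S_i) \leq H_i$ hops, is correct throughout, its interior lies inside $S_i \setminus \{r_i\}$, and the endpoints $r_1,\dots,r_n$ are distinct since $r_i \in S_i$ and the $S_i$ are pairwise disjoint. The same disjointness yields pairwise disjoint interiors for the $\mathcal{X}_i$ in the sense of Definition~\ref{disjoint}. Thus Theorem~\ref{threl}'s hypothesis is met for $v$ against $\mathcal{R}_\infty$, so $v \in \mathcal{R}_\infty$, closing the induction and giving the desired inclusion.

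The main obstacle I foresee is the traceback step in the middle paragraph: one has to argue carefully that in the silent-Byzantine execution every link of each chain is produced by a correct node executing step~1(b), which is what forces the chain to bottom out at a previously delivering correct node rather than dying somewhere inside the network. A secondary, minor point is justifying the linearization of first-delivery events in an asynchronous model, which is standard via Lamport's happens-before relation.
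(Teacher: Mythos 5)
Your proof is correct and follows essentially the same route as the paper: fix a suitable execution, trace each delivery chain backwards as in the proof of Theorem~\ref{thsafe}, and conclude that the hypothesis of Theorem~\ref{threl} is met with endpoints among nodes that delivered earlier. You organize this as an induction over delivery order in an explicitly silent-Byzantine execution instead of the paper's minimal-counterexample argument, and making the Byzantine nodes silent is a welcome explicit justification (left implicit in the paper) of why every traced chain consists of correct nodes and terminates at a node that has already delivered.
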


\begin{proof}
Let us suppose the opposite: the set $\mathcal{R}$ constructed with Theorem~\ref{threl} is not the largest possible reliable node set. Let $\mathcal{R'}$ be the largest possible reliable node set.

Let there be an execution where all the nodes of $\mathcal{R}$ delivered the good information, but no other correct node delivered the good information so far.
Such an execution is possible, as the construction of $\mathcal{R}$ with Theorem~\ref{threl} does not require that any node $u \notin \mathcal{R}$ delivers the good information.

Let $v$ be the first node of $\mathcal{R'} - \mathcal{R}$ that delivers the good information in the following of the execution.
Then, by a reasoning similar to the proof of Theorem~\ref{thsafe}, we show that there must exist \dots 

\begin{itemize}
\item $n$ distinct nodes $(u_1,\dots,u_n)$
\item $n$ disjoint correct paths $(\mathcal{X}_1,\dots,\mathcal{X}_n)$ such that, $\forall i \in \{1,\dots,n\}$, $\mathcal{X}_i$ is a path of at most $H_i$ hops connecting $v$ and $r_i$.
\end{itemize}

\dots such that the nodes $(u_1,\dots,u_n)$ have previously delivered the good information.
As the only correct nodes that have previously delivered the good information are the nodes of $\mathcal{R}$, we have $\{u_1,\dots,u_n\} \subseteq \mathcal{R}$, and the condition of Theorem~\ref{threl} is verified for $\mathcal{R}$ and $v$. So $\mathcal{R} \cup \{v\}$ is a reliable node set, and $\mathcal{R}$ is not the largest reliable node set that can be constructed with Theorem~\ref{threl}. This contradiction completes the proof.
\end{proof}

\subsection{Message complexity}
\label{messcomp}

At last, let us evaluate the message complexity of our protocol. We only consider the case where all nodes are correct, as the Byzantine nodes can send as many messages as they want.

Let $|G|$ be the number of nodes, and let $d$ be the maximal degree of the network -- that is, the maximal number of neighbors for a single node.
According to our protocol, each node $u$ multicasts $(m,\o)$ at most once.
Then, each neighbor of $u$ multicasts $(m,\{u\})$, which makes at most $1 + d$ messages. This process is repeated $n$ times, which makes at most $1 + d + d^2 + \dots + d^n = o(d^n)$ messages.
So, $o(d^n|G|)$ messages are sent by the protocol.

Therefore, if we consider that $d$ and $n$ are bounded, $o(|G|)$ messages are sent, and the message complexity is the same as an unsecured broadcast protocol (see \ref{informal}).

\section{Experimental Evaluation}
\label{sec_exp}

In this section, we experimentally compare the performances of different settings of our protocol. We detail our motivations, then describe the methodology and comment on the results. Finally, we provide a quantitative comparison with existing protocols, and show the improvement.

\subsection{Motivation}

We would like to evaluate the performances of different settings of our protocol, and also compare them with other existing protocols.

As our protocol is designed for loosely connected networks, we choose grid-shaped networks for experimentation (see Figure~\ref{fig:grid}), where each node has at most $4$ neighbors. Very few Byzantine-robust protocols exist for such sparse networks.
We also compare the performances between grid and \emph{torus} networks (a grid without border), to show the influence of border effects.

In order to quantify these performances, we assume a random uniform distribution of Byzantine failures. Our metric is the probability that a node delivers the good information. We choose this probabilistic model for two reasons:

\begin{itemize}
\item As stated in the introduction, it is difficult or impossible to achieve perfect reliable broadcast is sparse networks. Therefore, we release deterministic guarantees, and aim at probabilistic guarantees -- for instance, a node must deliver the good information with a probability of $0.99$.
\item This random model has realistic applications:
in a network, each component has a weak yet positive probability to misbehave. Besides, some Byzantine agents can join a peer-to-peer overlay at any moment, or be introduced in a collection of wireless sensors before their deployment on the field.
\end{itemize}

Simulating a distributed protocol in the presence of Byzantine failures is a far too difficult problem. Indeed, it would implie to make hypotheses on the order of activation of processes, on the order of reception of messages, and above all, on the behavior of Byzantine nodes: how can we guarantee that they actually adopt the worst possible behavior? To bypass these difficulties, we do not directly simulate the protocol, but use the reliability properties of Section~\ref{sec_prop} to evaluate its performances without adding more hypotheses.

\subsection{Methodology}

We perform our evaluation on torus and grid networks.

\begin{definition}[Torus and grid]
A $N \times N$ \emph{torus} (resp. \emph{grid}) network is a network such that:
\begin{itemize}
\item Each node has a unique identifier $(i,j)$ with
$1 \leq i \leq N$ and $1 \leq j \leq N$.
\item Two nodes $(i_{1},j_{1})$ and $(i_{2},j_{2})$ are neighbors if and only if one of these two conditions is satisfied:
\begin{itemize}
\item $i_{1} = i_{2}$ and $ \lvert j_{1}-j_{2} \rvert  = 1$ or $N$ (resp. $1$).
\item $j_{1} = j_{2}$ and $ \lvert i_{1}-i_{2} \rvert = 1$ or $N$ (resp. $1$).
\end{itemize}
\end{itemize}
\end{definition}

\begin{figure}
\begin{center}
\includegraphics[width=4.5cm]{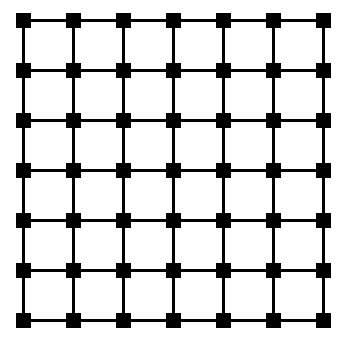}
\caption{A $7 \times 7$ grid network} 
\label{fig:grid}
\end{center}
\end{figure}

These topological identifiers $(i,j)$ are not to mismatch with the identifiers used in the protocol: according to our hypotheses, the nodes do \emph{not} know their position on the grid.

An example of grid network is given in Fig~\ref{fig:grid}.
A torus network can be seen as a grid network with a continuity between the left-right and up-down extremities.

We assume an uniform distribution of Byzantine failures: each node has the same probability $\lambda$ to be Byzantine. We call this probability \emph{Byzantine rate}. We would like to evaluate, for a given Byzantine rate $\lambda$, the probability $P(\lambda)$ for a node to deliver the good information.
For this purpose, we use a Monte-Carlo method:

\begin{itemize}
\item We generate several random distributions of Byzantine nodes, with the Byzantine rate $\lambda$.
\item For each distribution, we randomly choose a source node $s$ and a node $v$. Then, we use Theorem~\ref{thsafe} and Theorem~\ref{threl} to construct a reliable node set (see Definition~\ref{rns}). If $v$ is in the reliable node set, the simulation is a success -- else, it is a failure.
\item With a large number of simulations, the fraction of successes approximates $P(\lambda)$.
\end{itemize}

Il the following, we compare the performances of $4$ different settings of our protocol:
\begin{itemize}
\item Setting $A$: $(1,2)$
\item Setting $B$: $(1,2,5)$
\item Setting $C$: $(1,3,3)$
\item Setting $D$: $(1,2,5,5)$
\end{itemize}

According to Theorem~\ref{thsafe}, we have interest to use the smaller parameters $(H_1,\dots,H_n)$ for our settings. However, if we substract $1$ to \emph{any} parameter of the aforementionned settings, the construction of a reliable node set becomes impossible on grid and torus topologies. Therefore, the choice of these $4$ settings corresponds to a certain optimum.

In particular, setting $A$, of the form $(1,H)$, is the optimal setting for the previous protocol \cite{Trig} on grid and torus topologies. Our experiments show that it is outperformed by other settings.

\subsection{Results}

\begin{figure*}
\begin{center}
\includegraphics[width=18cm]{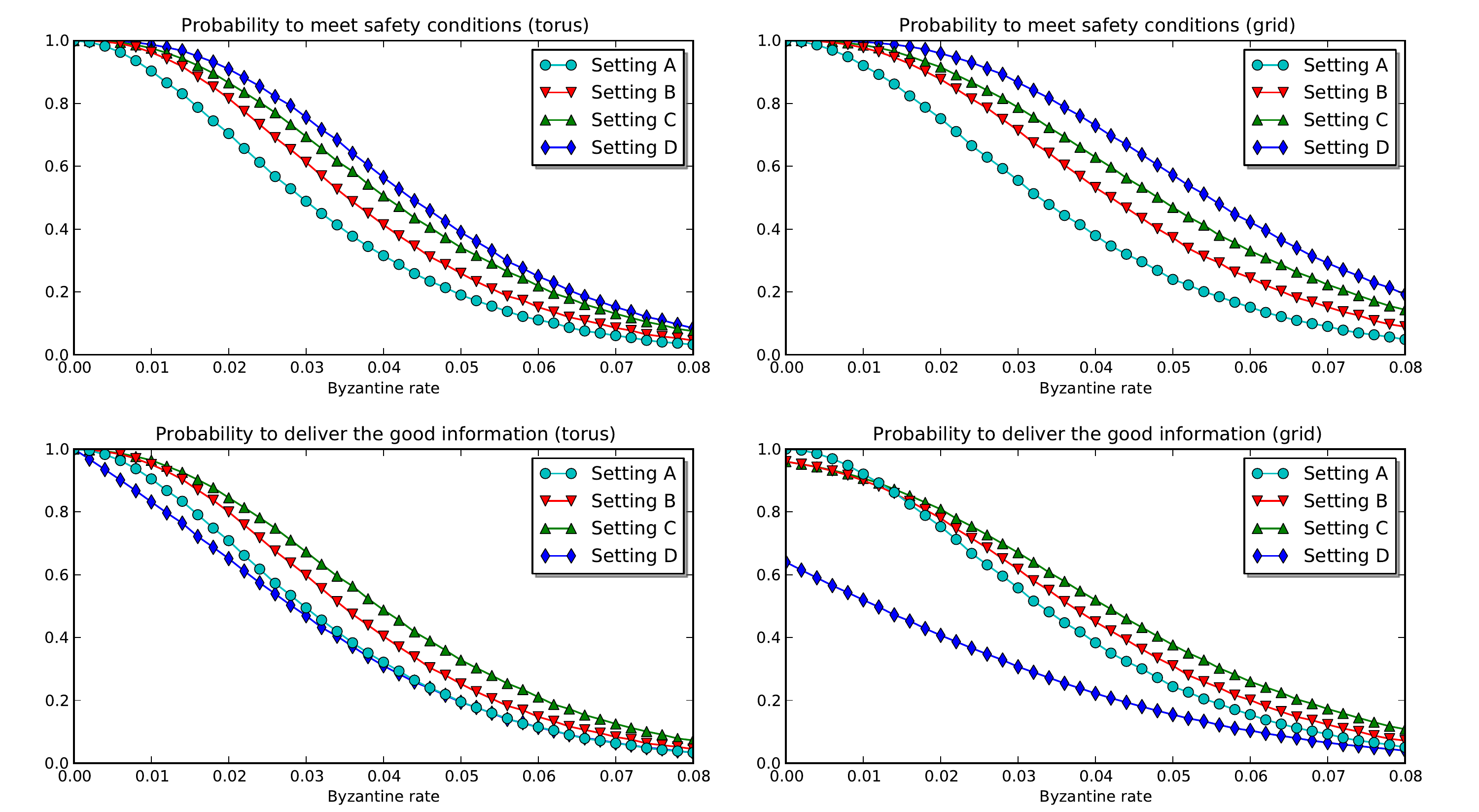}
\caption{Simulation results on a $10 \times 10$ network} 
\label{fig:simu1}
\end{center}
\end{figure*}

\begin{figure*}
\begin{center}
\includegraphics[width=18cm]{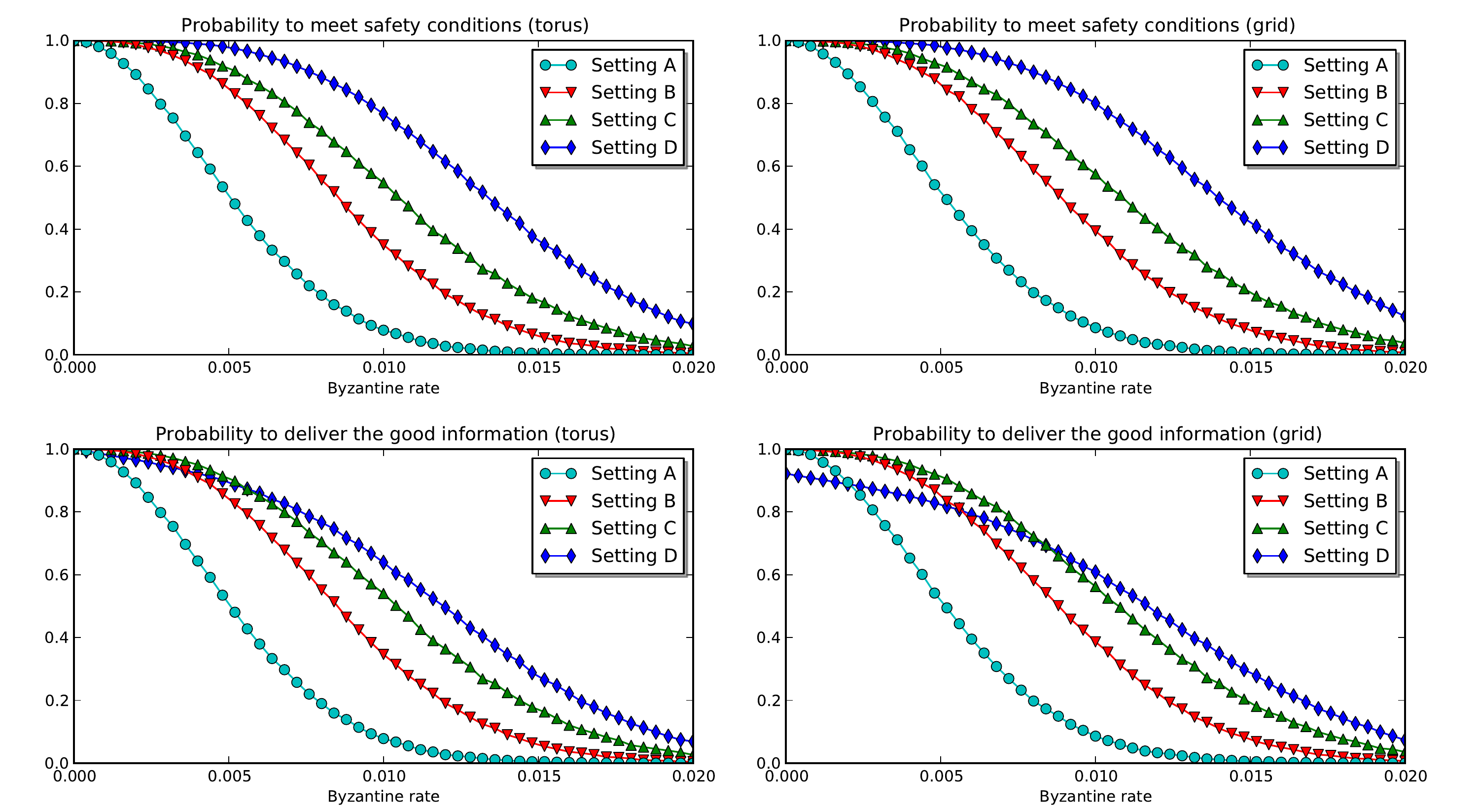}
\caption{Simulation results on a $50 \times 50$ network} 
\label{fig:simu2}
\end{center}
\end{figure*}

\begin{figure*}
\begin{center}
\includegraphics[width=18cm]{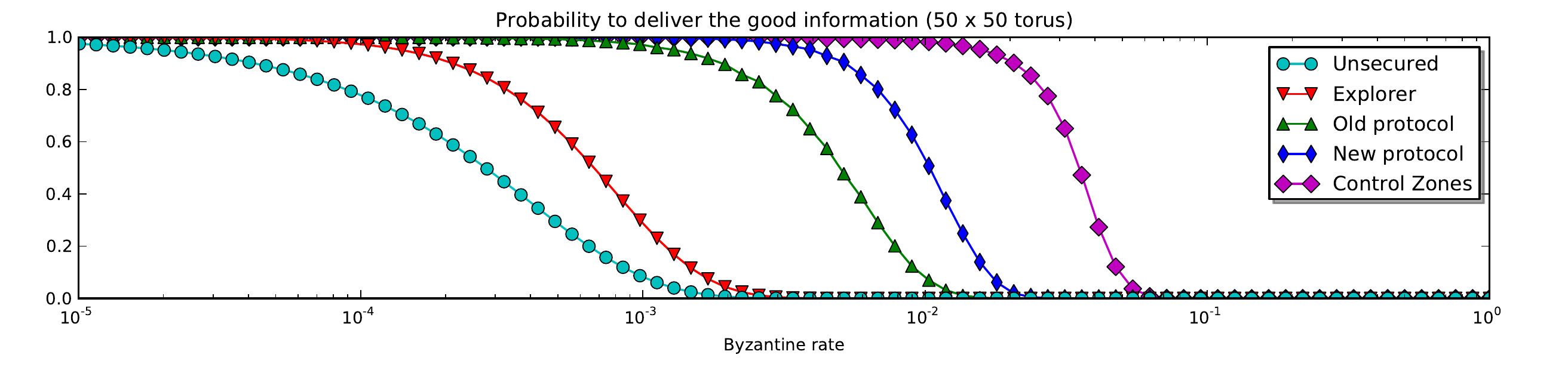}
\caption{Experimental comparison of existing protocols on a logarithmic scale} 
\label{fig:compa}
\end{center}
\end{figure*}

The experimental results are presented in Figure~\ref{fig:simu1} and \ref{fig:simu2}. 
\begin{itemize}
\item We ran simulations on two networks: a ``small'' network of $10 \times 10$ nodes (Figure~\ref{fig:simu1}), and a ``large'' network of $50 \times 50$ nodes (Figure~\ref{fig:compa}).
\item For both networks, we consider the torus topology (left column) and the grid topology (right column).
\item For each case, we represented the probability to meet the safety conditions (no correct node can deliver a false information), then the probability $P(\lambda)$ to deliver the good information.
\end{itemize}

Let us comment on these results. We study the three following aspects: the probability of safety, the border effects and the existence of an optimal setting.

\paragraph{Probability of safety}

The probability to meet safety conditions increases with the complexity of the setting ($2$ paths for setting $A$, $3$ paths for settings $B$ and $C$, $4$ paths for setting $D$). Indeed, as we use more paths, it becomes less likely to have a critical configuration of Byzantine nodes (see Figure~\ref{fig:principle}-b).

This probability decreases with the size of the network: for a given Byzantine rate, a larger network has a greater frequency of critical configurations. Besides, the disparities between the different settings also increase with the size: the results on Figure~\ref{fig:simu1} are more dispersed than on Figure~\ref{fig:simu2}.

At last, the results are slightly better for the grid topology: as the grid is less connected than the torus, a critical configuration is less likely to occur.

\paragraph{Border effects}

When the Byzantine rate equals zero (all nodes are correct), the probability to deliver the good information equals $1$ for the torus topology, but not for the grid topology. Indeed, in the torus, each node has $4$ neighbors, which is sufficient for the $2$, $3$ or $4$ node-disjoint paths required by our settings. However, in the case of the grid, some border nodes have only $3$ or $2$ neighbors (see Figure~\ref{fig:grid}). Therefore:
\begin{itemize}
\item The nodes with $2$ neighbors cannot deliver any information with settings $B$, $C$ and $D$.
\item The nodes with $3$ neighbors cannot deliver any information with setting $D$.
\end{itemize}
Also, we could exclude these problematic border nodes from our statistics, by dividing the probabilities by $P(0)$. If we do so, the results are roughly equivalent for the torus and the grid: the grid offers better chances to meet safety conditions, but the construction of a reliable node set becomes slightly more difficult.

\paragraph{Optimal setting}

There seems to exist an optimal complexity for the setting of the protocol.
Indeed, as we increase the number of paths, we increase the probability to meet safety conditions (according to Theorem~\ref{thsafe}), but we also make the construction of a reliable node set more difficult (according to Theorem~\ref{threl}). Therefore, there is a compromise to find between these two tendancies. This is illustred on the $10 \times 10$ torus, where setting $D$ (the most complex setting) offer the best safety probability, but also the worst communication probability.
Thus, setting $C$ is the best setting for this network.

Besides, the size of the network seems to have an influence on this optimum. Indeed, on the $50 \times 50$ torus, setting $D$ now offers the best communication probability for the Byzantine rates $\lambda > 0.005$.
However, this is not the case for smaller Byzantine rates. Indeed, the neighbors of a Byzantine node cannot deliver the good information with setting $D$, as they have at most $3$ correct neighbors ($4$ correct node-disjoint paths are required). Therefore, setting $C$ achieves the best performances in both networks.

\subsection{Comparison with existing solutions}

At last, let us provide a quantitative comparison with existing solutions. 

According to our hypotheses, there is no initial topology knowledge:
the nodes do not know \emph{a priori} their position on the grid. Therefore, our protocol must be compared with other Byzantine-robust protocols that respect this constraint, yet still work on such sparse networks.
To our knowledge, the only protocols meeting these two conditions
are Explorer \cite{NT09j} and the previous protocol \cite{Trig}.

Let us suppose that we want to guarantee a communication probability $P(\lambda)$ of at least $0.99$ on a $50 \times 50$ torus. Then, we can tolerate a Byzantine rate $\lambda$ of:
\begin{itemize}
\item $4 \times 10^{-6}$ with an unsecured broadcast (see~\ref{informal})
\item $5 \times 10^{-5}$ with Explorer \cite{NT09j}
\item $5 \times 10^{-4}$ with the previous protocol \cite{Trig}
\item $2 \times 10^{-3}$ with our protocol (improvement of factor $4$)
\end{itemize}

These performances are to compare with the Control Zones protocol \cite{CtrZ}, where the nodes are required to know their position on the grid. With this additionnal hypothesis, we can tolerate a Byzantine rate of $8 \times 10^{-3}$. As we can see, releasing topology knowledge still has a cost in term of performances.

All these solutions are represented in Figure~\ref{fig:compa}, on a logarithmic scale.

\section{Conclusion}

In this paper, we proposed a paremetrizable approach for Byzantine Broadcast in loosely connected networks, assuming a minimal number of hypotheses.
We showed that, by manipulating protocol parameters, we can optimize performance in the presence of random Byzantine failures.
However, not using topology knowledge still has a cost in term of Byzantine tolerance.

To go further, it would be interesting to make experimentations on real-life topologies, such as sensor networks, robot networks or peer-to-peer overlays. Also, an motivating open challenge is to obtain theoretical probabilistic guarantees with global network parameters -- diameter, node degree, connectivity -- in order to determine the optimal setting automatically.

\bibliographystyle{plain}
\bibliography{biblio}

\end{document}